\documentclass[]{amsart} 
\pdfminorversion=4

\usepackage{amsmath}
\usepackage{color}
\usepackage{algorithm}
\usepackage{tikz}
\usepackage{graphicx}
\usepackage{mathtools}
\usepackage{tikz}
\usetikzlibrary{arrows,automata}
\usepackage{amsfonts}
\usepackage{algpseudocode}


\newtheorem{define}{Definition}

\newtheorem{subproblem}{Subproblem}
\newtheorem{problem}{Problem}

\newtheorem{remark}{Remark}
\newtheorem{assumption}{Assumption}
\newtheorem{theorem}{Theorem}
\newtheorem{corollary}{Corollary}
\newtheorem{proposition}{Proposition}

\usepackage{amsfonts}

\bibliographystyle{apalike}

\begin{document}
\title[Safety Model Predictive Control of Urban Traffic Networks]{A Provably Correct MPC Approach to  Safety Control of Urban Traffic Networks}

%
%
%
%
%

\author[Sadra Sadraddini]{Sadra Sadraddini}
\address{Department of Mechanical Engineering, Boston University, Boston, MA 02215}
\email{sadra@bu.edu}
\author[Calin Belta]{Calin Belta}
\address{Department of Mechanical Engineering, Boston University, Boston, MA 02215}
\email{cbelta@bu.edu}
\thanks{This work was partially supported by the NSF under grants CPS-1446151 and CMMI-1400167.}

\begin{abstract}
Model predictive control (MPC) is a popular strategy for urban traffic management that is able to incorporate physical and user defined constraints. However, the current MPC methods rely on finite horizon predictions that are unable to guarantee desirable behaviors over long periods of time. In this paper we design an MPC strategy that is guaranteed to keep the evolution of a network in a desirable yet arbitrary -safe- set, while optimizing a finite horizon cost function. Our approach relies on finding a robust controlled invariant set inside the safe set that provides an appropriate terminal constraint for the MPC optimization problem. An illustrative example is included. 
\end{abstract}

\maketitle

\section{Introduction}
Traffic congestion is a major problem in many cities worldwide. In recent decades, many methods for more efficient usage of existing physical infrastructure have been proposed including new strategies specialized for controlling traffic lights in an urban network \cite{papa_review}. With the advent of new sensing technologies and improvements in online computation capabilities, traffic responsive strategies are gaining more popularity. 

Model predictive control (MPC) is a powerful framework for coordinating urban traffic lights that relies on online optimization while accounting for various constraints in the system \cite{lin2011fast,kamal2014network,mirchand}. However, MPC is known to exhibit ``myopic" behavior that is a result of limited horizon planning. For instance, an MPC traffic light control strategy may lead a network to a state in which undesirable behaviors such as gridlock, spill-back and heavy congestion become inevitable for any future control action. A tempting resolution is elongating the prediction horizon which is often impractical from computational perspective. Some control strategies \cite{mirchand,cesme2013self,christofa2013arterial} have proposed enhancing the control architecture with additional layers that try to detect and avoid undesirable behaviors such as spill-back. However, these methods largely rely on heuristics rather than formal and verifiable measures.

MPC closed-loop strategies that are guaranteed to satisfy a set of constraints are studied extensively in the control theory literature \cite{mayne2000constrained}. Using set invariance theories \cite{blanchini1999survey} and terminal constraints \cite{kerrigan2001}, MPC strategies have been developed that are able to address stability issues while restricting the system trajectory to a convex safe set. However, applying similar methods to urban traffic models is impractical if not impossible due to the complexity of the constraints, controls and uncertainties. Furthermore, a critical bottleneck in the set invariance theories is the inability to deal with non-convex safe sets. Few results exist on computation of non-convex invariant safe sets for linear systems \cite{rakovic2004computation} \cite{perez2011maximal} \cite{rungger2013specification}. Such models are not sufficient for traffic networks, which have to take into consideration finite capacity roads, discrete controls (traffic lights), uncertain exogenous inputs (vehicular arrival rates) and non-convex safe sets that are able to specify desirable behaviors such as avoidance of spill-backs and gridlocks. Furthermore, there exist no results on MPC control of such traffic models from rich, temporal logic \cite{baier2008principles} specifications that are guaranteed to enforce the satisfaction of the specification over long periods of time. 

In this paper, we wish to design an MPC strategy that is guaranteed to confine the evolution of an urban traffic network to a user-defined (non-convex) ``safe" set. Inspired by recent advances in formal methods approaches to control theory, 
we propose a new method to overcome the mentioned issues, which is based on abstracting the network system model to a finite state system. Finite state representations of urban traffic networks have been recently investigated in \cite{coogan2015efficient,coogan2015signal}. In a finite system, we can easily solve a ``safety game" \cite{tabuada2009verification}. We prove that the safe invariant set found in the finite representation corresponds to a robust controlled invariant set in the original system that can be used as a terminal constraint for the MPC optimization problem. We are thus able to guarantee that the evolution of the system remains in the safe set, while planning ahead for optimality. Similar to works in \cite{lin2011fast,kamal2015traffic}, we formulate the MPC optimization problem as a mixed integer linear programming (MILP) problem. We also argue that now being able to use small prediction horizons, we also can rely on full enumeration of possible controls to find the best future control plan instead of solving a possibly large MILP.

 The network model and finite abstraction section of this work is almost identical to the work in  \cite{coogan2015signal}. However, there are clear differences between the control strategies. The authors in \cite{coogan2015signal} find a control strategy that satisfies a linear temporal logic (LTL) specification directly from solving a Rabin game \cite{yordanov2012temporal} on a finite state graph, without any attempt to consider optimality. However, we find an optimization based control strategy in the original continuous system subject to the constraints we find in the abstract finite state system. While LTL control in a traffic network typically requires the user to rigorously specify the desired behavior of each road and intersection, MPC naturally selects a (sub)optimal policy in the absence of user-defined constraints. We also discuss that the methods in this paper can be extended to safety specifications described by bounded time temporal logic formulas such as signal temporal logic (STL) at the expense of higher computational burden.

\section{Traffic Network Model}
\label{sec:model}
The urban traffic network model used in this paper is adopted from the discrete-time fluid-like flow model in \cite{coogan2015signal}.
The network consists of a set of links denoted by $\mathcal{L}$ and a set of intersections denoted by $\mathcal{V}$. Each $l\in \mathcal{L}$ is a oneway link from \emph{tail} intersection $\tau(l)\in \mathcal{V} \cup \emptyset$ toward \emph{head} intersection $\eta(l)\in \mathcal{V}$. 
The links that flow out of the network are not explicitly modeled.
For each link $l$, the set of \emph{downstream} links is defined as:
\begin{equation}
\mathcal{L}_l^{down}:=\left\{ k\in \mathcal{L} : \tau(k)=\eta(l) \right \}.
\end{equation}
Similarly, the set of \emph{upstream} links is:
\begin{equation}
\mathcal{L}_l^{up}:=\left\{ i\in \mathcal{L} : \eta(i)=\tau(l) \right \},
\end{equation}
and the set of \emph{adjacent} links is:
\begin{equation}
\mathcal{L}_l^{adj}:=\left\{ j\in \mathcal{L} : \tau(j)=\tau(l) \right \}.
\end{equation}
Links in $\mathcal{L}_l^{local}:=\{\mathcal{L}_l^{up} \cup  \mathcal{L}_l^{down} \cup  \mathcal{L}_l^{adj} \}$ are local to $l$.
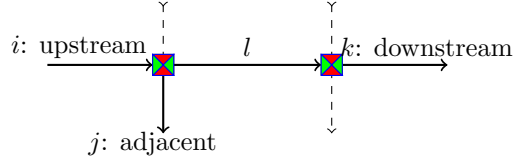
\begin{figure}[t]
\begin{center}
\begin{tikzpicture}[xscale=1.4,yscale=0.7]
\draw [thick,->] (1.2,0.1)--(2.2,0.1);
\draw [thick,->] (2.4,0.1)--(3.8,0.1);
\draw [thick,->] (4,0.1)--(5,0.1);
\draw [thick,->] (2.3,0.0)--(2.3,-1.2);
\draw [dashed,-<] (2.3,0.2)--(2.3,1.3);
\draw [dashed,-<] (3.9,0.2)--(3.9,1.3);
\draw [dashed,->] (3.9,0.0)--(3.9,-1.2);
\draw [] (2.2,-0.1) rectangle (2.4,0.3);
\draw [] (3.8,-0.1) rectangle (4,0.3);

\draw [blue, fill=red] (2.2,-0.1) -- (2.3,0.1) -- (2.4,-0.1) -- (2.2,-0.1);
\draw [blue, fill=red] (2.2,0.3) -- (2.3,0.1) -- (2.4,0.3) -- (2.2,0.3);
\draw [blue, fill=green] (2.2,-0.1) -- (2.3,0.1) -- (2.2,0.3) -- (2.2,-0.1);
\draw [blue, fill=green] (2.4,-0.1) -- (2.3,0.1) -- (2.4,0.3) -- (2.4,-0.1);

\draw [blue, fill=red] (3.8,-0.1) -- (3.9,0.1) -- (4.0,-0.1) -- (3.8,-0.1);
\draw [blue, fill=red] (3.8,0.3) -- (3.9,0.1) -- (4.0,0.3) -- (3.8,0.3);
\draw [blue, fill=green] (3.8,-0.1) -- (3.9,0.1) -- (3.8,0.3) -- (3.8,-0.1);
\draw [blue, fill=green] (4.0,-0.1) -- (3.9,0.1) -- (4.0,0.3) -- (4.0,-0.1);

\node at (4.8,0.45) {$k$: downstream};
\node at (3.1,0.45) {$l$};
\node at (1.5,0.45) {$i$: upstream};
\node at (2.2,-1.4) {$j$: adjacent};

\end{tikzpicture}

\caption{Links local to link $l$}
\label{fig:local}
\end{center}
\end{figure}
A graphical illustration of the terminologies for a local network is shown in Fig. \ref{fig:local}.
The number of vehicles on  link $l$ at time $t \in \mathbb{Z}_{\geq 0}$ is denoted by $x_l[t] \in [0,x_l^{cap}]$, where $x_l^{cap}$ is the capacity of link $l$. The network state at time $t$ is the vector representation $x[t]=\left \{ x_l \right \}_{l \in \mathcal{L}} \in \mathcal{X} \subset \mathbb{R}^n$ where $n=\left\vert{\mathcal{L}}\right\vert$ is the number of links and 
\begin{equation}
\label{eq:statespace}
\mathcal{X}=\prod_{l\in \mathcal{L}} [0,x_l^{cap}].
\end{equation}
The vehicular flow of link $l$ at time $t$ is controlled by a binary decision denoted by $u_l[t] \in \left\{0,1\right\}$, where values $0$ and $1$ represent the red and green traffic lights, respectively. The control decision combining all traffic lights at time $t$ is $u[t]=\left \{ u_l \right \}_{l \in \mathcal{L}} \in \mathcal{U} \subset \left\{0,1\right\}^n$, where $\mathcal{U}$ is the set of admissible combinations of traffic lights, which is defined with respect to the traffic conventions. For instance, the green/green traffic light combination for two perpendicular links $l_1$ and $l_2$ that have a common head intersection is excluded by adding the constraint $u_{l_1}[t]+u_{l_2}[t] \leq 1$, or $u_{l_1}[t]+u_{l_2}[t] =1$. In the latter case, the red/red combination is also disallowed. Note that the set $\mathcal{U}$ is finite.



Now we describe the network dynamics. When $u_l[t]=1$, vehicles of link $l$ flow to its downstream links $ \mathcal{L}_l^{down}$. Let $\beta_{lk}$ be the ratio of vehicles turning into link $k \in \mathcal{L}_l^{down}$. The following relation holds for the turning ratios:
\begin{equation}
\sum_{k \in \mathcal{L}_l^{down} } \beta_{lk} \leq 1,
\end{equation}
where the inequality indicates that some vehicles may flow out of the network. The capacity available  at link $l$ to its upstream links at time $t$  is $x_l^{cap}-x_l[t]$. Let $\alpha_{il}^{u[t]}$ be the capacity portion of link $l$ available to link $i \in \mathcal{L}_l^{up}$ when the decision on traffic lights is $u[t]$. The following relation holds for the capacity ratios:
\begin{equation}
\sum_{i \in \mathcal{L}_l^{up} } \alpha_{il}^{u[t]}= 1.
\end{equation}
For simple intersections, it is reasonable to assume $\alpha_{il}^{u[t]}$ is constant if $u_i[t]=1$ and zero otherwise. Therefore, for simplicity, we drop out the ``$u[t]$" superscripts from the capacity ratios in the rest of the paper.
The number of vehicles flowing out of link $l$ at time step $t$ is given by the following equation:
\begin{equation}
\label{eq:flow_out}
f_l[t]=u_l[t] \min \left\{ x_l[t], c_l, \min_{k\in \mathcal{L}_l^{down}} \frac{\alpha_{lk}}{\beta_{lk}} (x_k^{cap}-x_k[t])  \right\},
\end{equation}
where $c_l$ is the maximum number of vehicles that can flow out of link $l$ in one time step. The one-step evolution of link $x_l$ is given by:
\begin{equation}
\label{eq:local}
\begin{split}
& x_l[t+1] =  \mathcal{F}_l (x_l^{local}[t],u[t],d_l[t]) \\ 
&= \min \left\{ x_l[t]-f_l[t]+\underset{i \in \mathcal{L}_l^{up} } \sum \beta_{il} f_i^{}[t] + d_l[t], x_l^{cap} \right\}, \\
\end{split}
\end{equation}
where $x_l^{local}=\left\{x_m\right\}, m=\left\{ l \cup \mathcal{L}_l^{local} \right\}$ and $d_l[t]$ is the number of vehicles arriving in the link $l$  from outside of the network at time $t$.
We also denote $d[t]=\left \{ d_l \right \}_{l \in \mathcal{L}}  \in \mathcal{D} \subset \mathbb{R}^n$, where $\mathcal{D}$ is assumed to be a known set. We observe that $\mathcal{F}_l$ is a piecewise affine function.
The network dynamics is written in the following compact form:
\begin{equation}
\label{eq:dynamics}
x[t+1]=\mathcal{F} \left(x[t],d[t],u[t]\right),
\end{equation} 
where $\mathcal{F} :\mathcal{X} \times \mathcal{D}  \times \mathcal {U} \rightarrow \mathcal{X}$ is a piecewise affine function.

\section{Problem Formulation and Approach}
In this section we formulate the problem and briefly explain the approach. 
First, we define the safety set as a union of hyper-rectangles in the state space $\mathcal{X}$. 
A hyper-rectangle $\mathcal{H} \subset \mathcal{X}$ is defined with respect to a set of \emph{rectangular} inequalities in the form of:
\begin{equation}
\label{eq:hyper_rectangle}
\mathcal{H}:= \left\{ x \in \mathcal{X} \left | \right.
x_{l_i} \le r_{i}
\right \}, i=1,\cdots, p
\end{equation}
where  $r_i \in (0,x_{l_i}^{cap})$ and $p\le n$. A \emph{safe set} $\mathcal{S}$ is defined as a union of hyper-rectangles:
\begin{equation}
\label{eq:safe}
\mathcal{S}:= \bigcup_{s} \mathcal{H}_s , ~s=1,\cdots,n_\mathcal{S},
\end{equation}
where each $ \mathcal{H}_s$ is a hyper-rectangle in the form of \eqref{eq:hyper_rectangle}. Note that the safe set is, in general, non-convex. Considering constraints of the form $x_{l} \le r_l$ stems from the practical purpose that the safe set should always favor fewer number of vehicles on each link. 
For easier readability and expressivity, we can describe $\mathcal{S}$ as the set of values $x \in \mathcal{X}$ that satisfy a boolean expression over a set of rectangular constraints that are connected with boolean operators $\wedge$ (conjunction) and $\vee$ (disjunction). For example, we may express $\mathcal{S}$ as state values that satisfy the boolean expression $\left ( (x_1 \le r_1) \vee (x_2 \le r_2) \right) \wedge (x_3 \le r_3)$. The conversion of a set defined by a boolean expression to the hyper-rectangle form of \eqref{eq:safe} is straightforward and is not discussed in this paper.
\begin{remark}
We can extend the notion of safe set for the instantaneous state into the time evolution of the state. For example, we may desire that if $x_l[t] > r_l$, then $x_l[t+1] \le r_l$. This specification can be captured by the boolean expression $(x_l[t] \le r_l) \vee (x_l[t+1] \le r_l)$, which is over the state space $\mathcal{X} \times \mathcal{X}$. In a more concise way, we are able to construct safe sets in higher dimensions from time bounded temporal logic specifications such as signal temporal logic (STL) and bounded linear temporal logic (BLTL).
\end{remark}


The network model \eqref{eq:dynamics} is a piecewise affine system controlled by discrete inputs (traffic lights) where exogenous values for $d$ are considered as adversarial inputs. The \emph{safety problem} considered in this paper is finding a control strategy such that for all allowable realizations of adverserial inputs, the evolution of the system remains in the safe set. 
 On the other hand, the controls that ensure safety are often not unique. For practical implementation, (sub)optimal selection of the controls subject to an appropriate cost function is also important. Since the system is complex and various constraints are present, we use MPC strategy to the find the optimal control sequence over a finite horizon. Once an optimal control sequence is found, only the current step control is applied to the system and given the new measurements at next time step, a new optimal control sequence is found accordingly. The finite horizon cost criterion considered in this paper is the total time spent (TTS) of the network,  which is used extensively in traffic literature. It can be shown that finite horizon TTS is equivalent to the total number of vehicles \cite{papa_review}:
\begin{equation*}
\label{eq:cost}
J:=\sum \limits_{\tau=t+1}^{t+H} \sum \limits_{l \in \mathcal{L}} x_l[\tau],
\end{equation*}
where $H$ is the length of the prediction horizon. 
The finite horizon control sequence starting at time $t$ is: 
\begin{equation*}
u^H[t]:=\left\{ u[t], u[t+1], \cdots, u[t+H-1]\right\}.
\end{equation*}
Given the finite horizon exogenous input sequence 
 $\left\{ d[t], \cdots, d[t+H-1]\right\}$ and the current system state $x[t]$, one can compute, using \eqref{eq:dynamics}, the finite horizon evolution of the system $\left\{ x[t+1], \cdots, x[t+H]\right\}$. 
Since the values of $d$ are in general unknown, it is impossible to precisely predict the finite horizon cost function. In this paper, we simply replace the values of $d$ with estimated, or nominal, values $d^e[t]$ and predict the cost function with the resulting estimated state values $x^e[t]$. The values of $d^e[t]$ are assumed to be given by some means like online measurements from remote sensing available in a modern urban setting. We also assume that the current state $x[t]$ is known, however, this assumption is not totally restrict as discussed later in the paper. Note that optimizing the estimated, or nominal, cost is often an appropriate approach in robust MPC as other approaches such as optimization of the worst case cost are usually computationally intensive and often result in poor performance \cite{bemporad1999robust}. Notice that the estimated values for adversarial inputs are only used to predict the finite horizon cost whereas the safety property is guaranteed with respect to the all allowable adversarial inputs. 
\vspace{10pt}
\begin{problem}
\label{problem:main}
Given an urban traffic network \eqref{eq:dynamics} and a safety set $\mathcal{S}$ in the form of \eqref{eq:safe}, find a control strategy that for all allowable sequences of adversarial inputs $d[t] \in \mathcal{D}$, the evolution of the system is guaranteed to remain in the safe set:
\begin{equation}
x[t] \in \mathcal{S}, \forall t \in \mathbb{Z}_{\ge 0},
\end{equation}
and, optimize an estimated (nominal) finite horizon cost function:
\begin{equation}
\label{eq:cost}
J^e=\sum \limits_{\tau=t+1}^{t+H} \sum \limits_{l \in \mathcal{L}} x^e_l[\tau],
\end{equation}
where $x^e$ are given by the estimated (nominal) finite horizon evolution of the system.
\end{problem}
\endproof

The MPC optimization problem that includes safety only over the finite horizon  is:
\begin{equation}
\label{eq:optimizeproblem}
\begin{array}{cl}
u^H[t] =& \text{argmin}~ \sum \limits_{\tau=t+1}^{t+H} \sum \limits_{l \in \mathcal{L}} x^e_l[\tau]\\
\text{s.t} & x[\tau] \in  \mathcal{S}, \\
 &  x[\tau+1] = \mathcal{F}(x[\tau],d[\tau],u[\tau]), \\
   & x^e[\tau+1] =\mathcal{F}(x[\tau],d^e[\tau],u[\tau]), \\
 &   \forall d[\tau] \in \mathcal{D}, \tau=t,\dots, t+H-1. \\
\end{array}
\end{equation}

However, finite horizon  safety does not guarantee infinite horizon safety. In other words, it is possible that the MPC optimization problem becomes infeasible at some time. The key contribution of this paper is guaranteeing \emph{recursive feasibility}, which is defined as follows.
\begin{define}
An MPC problem is recursively feasible if the application of control $u[t]$ from the solution of the MPC optimization problem at time $t$ guarantees the feasibility of the MPC optimization problem at time $t+1$. 
\end{define}

A well known method to guarantee recursive feasibility is adding an appropriate terminal constraint \cite{kerrigan2001} to the MPC problem in the form of:
\begin{equation}
\label{eq:terminal}
x[t+H] \in \mathcal{T}, 
\end{equation}
where $\mathcal{T} \subseteq \mathcal{S}$ is the \emph{terminal set}. Our approach to Problem \ref{problem:main} involves solutions to the two following subproblems.

\begin{subproblem}(Terminal Set)
\label{problem:terminal}
Find a terminal set $\mathcal{T}$ such that adding the terminal constraint \eqref{eq:terminal} to the MPC optimization problem \eqref{eq:optimizeproblem} guarantees recursive feasibility. 
\end{subproblem}
\begin{subproblem}(MPC)
\label{problem:mpc}
 Find $u[t]$ by solving the optimization problem \eqref{eq:optimizeproblem} with the addition of the terminal constraint \eqref{eq:terminal}.
\end{subproblem}
Subproblem \ref{problem:terminal} is solved once and in offline fashion, while Subproblem \ref{problem:mpc} is solved online at each time step. It is also reasonable to assume that in the online implementation, more precise knowledge of values of $d$ are available for a finite horizon. However once solving Subproblem \ref{problem:terminal}, the whole set of $\mathcal{D}$ is taken into account.
Our approach to Subproblem \ref{problem:terminal} concerns computing a robust controlled invariant set inside $\mathcal{S}$ that involves abstracting the system into a finite state transition system, which is explained in detail in Section \ref{sec:robust}.
Our solution to Subproblem \ref{problem:mpc} is based on formulating the problem as a MILP problem that is solvable using efficient commercial solvers such as Gurobi and CPLEX. The translation of dynamical and set constraints to MILP is explained in Section \ref{sec:control}.


\section{Terminal Set and Invariance}
\label{sec:robust}

This section focuses on the solution to Subproblem \ref{problem:terminal}. First, we define the notion of robust controlled invariant set. Next, we provide a summary on how to abstract system \eqref{eq:dynamics} to a finite state transition system. We then state, and prove, that the properties of the abstract system can be used to find a solution to Subproblem \ref{problem:terminal}.

\subsection{Robust Controlled Invariant Set}
\begin{define}
\cite{blanchini1999survey}
Given the discrete time uncertain control system:
\begin{equation}
\label{eq:new_dynamics}
z_{k+1}=f(z_k,v_k,w_k),
\end{equation} 
where $z_k \in \mathcal{Z}$ is the state, $v_k\in \mathcal{V}$ is the control and $w_k \in \mathcal{W}$ is the disturbance or adversarial input,
the set $\mathcal{C} \subseteq \mathcal{Z}$ is \emph{robust controlled invariant} if and only if:
\begin{equation}
\forall z_k \in \mathcal{C}~ \exists v_k \in \mathcal{V} ~s.t.~ f(z_k,v_k,w_k)\in \mathcal{C} ~\forall w_k \in \mathcal{W}.
\end{equation}
\end{define}

It can be shown that the union of robust control invariant sets is itself robust control invariant. It also can be shown that if there exists a robust control invariant set, then there exists a unique \emph{maximal robust controlled invariant set} (MRCIS) which is the union of all possible robust control invariant sets \cite{kerrigan2001}.

\begin{define} \cite{kerrigan2001}
Given a control system \eqref{eq:new_dynamics} and a set $\mathcal{G}$, the robust one step set (robust predecessor) $\Omega(\mathcal{G})$ is defined as:
\begin{equation}
\Omega(\mathcal{G}) = \big \{ z_k \in \mathcal{Z} \left | \right.  \exists v_k \in \mathcal{V} ~s.t.
 ~ f(z_k,v_k,w_k)\in \mathcal{G} ~\forall w_k \in \mathcal{W} \big \}.
\end{equation}
\end{define}
We wish to find the robust controlled invariant set of the traffic network inside the safe set $\mathcal{S}$. The traffic network's MRCIS inside $\mathcal{S}$, which is denoted by $\mathcal{I}$, can be computed using the well known Algorithm \ref{alg:mrcis}\cite{kerrigan2001}.
The main drawback of this algorithm is that finite termination is not guaranteed. Furthermore, early termination results in an over-approximation of MRCIS which is undesirable. Even if MRCIS is determined in finitely many steps, performing the operations in Algorithm \ref{alg:mrcis} for dynamics \eqref{eq:dynamics}, which is piecewise affine with adversarial inputs, and a non-convex safe set $\mathcal{S}$, is computationally intractable due to the severe limitations in performing polyhedral operations like Pontryagin's difference \cite{kerrigan2001} for a non-convex set. Our approach to overcome these issues is to abstract the system as a finite state transition system. In the finite realm, implementation of Algorithm \ref{alg:mrcis} is much easier and finite termination is assured. 

\begin{algorithm}
\caption{Procedure for MRCIS $\mathcal{I}$ inside $\mathcal{S}$}\label{alg:mrcis}
\begin{algorithmic}
   \State $\mathcal{I}_0=\mathcal{S}$
   \While{$\mathcal{I}_{i} \neq \mathcal{I}_{i+1}$ }
      \State $\mathcal{I}_{i+1}=\Omega(\mathcal{I}_i) \cap \mathcal{I}_i$
   \EndWhile
   \State \textbf{return} $\mathcal{I}=\mathcal{I}_{i}$
\end{algorithmic}
\end{algorithm}

\subsection{Finite State Abstraction}
\begin{define}(Finite State Transition System)
A (non-determistic) finite state transition system is defined as the tuple $\mathcal{FTS}=\left( Q,\Sigma, \delta, \Lambda, \lambda \right)$ where $Q$ is a finite set of states, $\Sigma$ is a finite set of symbols (controls), $\delta: Q \times \Sigma \rightarrow 2^ Q$ is the transition function, $\Lambda$ is a finite set of labels and $\lambda: Q \rightarrow \Lambda$ is a labeling function. 
\end{define}
Constructing a finite state transition system $\mathcal{FTS}=\left( Q,\Sigma, \delta, \Lambda, \lambda \right)$ from the original system \eqref{eq:dynamics} is treated in \cite{coogan2015efficient,coogan2015signal} and the details are not presented in this section. Instead, we summarize the main points about the properties of the abstract system and provide the necessary notation for the remaining of the paper. 

Abstraction involves partitioning the state space into a finite set of hyper-rectangles denoted by $Q$. The state space corresponding to each link $l \in \mathcal{L}$, $[0,x_l^{cap}]$, is partitioned into $N_l$ intervals:
\begin{equation}
\label{eq:intervals}
\left \{ [0,x_l^{(1)}],(x_l^{(2)},x_l^{(3)}],\cdots,(x_l^{(N_l-1},x_l^{cap)}] \right \}.
\end{equation}
By performing the cartesian product of the sets from \eqref{eq:intervals} for all different $l \in \mathcal{L}$, $\left | Q\right | = \prod_{l \in \mathcal{L}} N_l$ hyper-rectangles are obtained.
\footnote{We abuse the notation and call these sets hyper-rectangles even though not all of them contain their boundaries (see \eqref{eq:hyper_rectangle}).} 
Each abstract state $q\in Q$ uniquely represents a hyper-rectangle, denoted by $\mathcal{P}(q)$, where the function $\mathcal{P}: Q \rightarrow 2^\mathcal{X}$ is defined to map each abstract state into its hyper-rectangle representation inside $\mathcal{X}$. Note that these sets provide a partition of $\mathcal{X}$:
\begin{equation*}
\bigcup_{q \in Q} \mathcal{P}(q)=\mathcal{X}, ~
 \mathcal{P}(q_a) \cap \mathcal{P}(q_b)  =\emptyset, q_a \neq q_b.
\end{equation*} 
The labeling function $ \Lambda$ is defined with respect to the safe set $\mathcal{S}$. Each $q \in Q$ is labeled as $safe$ if the whole hyper-rectangle is inside $\mathcal{S}$ and $unsafe$ otherwise. Let $Q^\mathcal{S} \subset Q$ represent the set of states that are labeled safe. If the intervals \eqref{eq:intervals} are initialized with respect to $\mathcal{S}$, it can be shown that:
\begin{equation}
\mathcal{S}= \bigcup_{q \in Q^{\mathcal{S}}} \mathcal{P}(q).
\end{equation}
In other words, no $q \in Q$ represents a hyper-rectangle that is only partially in the safe set. 

The transitions are determined using dynamics \eqref{eq:dynamics}. Since the set of controls is finite, we let $\Sigma=\mathcal{U}$. For each abstract state $q \in Q$ and control $u \in \mathcal{U}$, the set of one-step reachable abstract states, denoted by $post(q,u)$, is computed by taking into account all allowable adversarial inputs:
\begin{equation}
\begin{array}{ll}
q^\prime \in {post}(q,u)  ~\text{if and only if} 
\\ 
\exists x\in \mathcal{P}(q),  \exists d \in \mathcal{D}   ~ s.t. ~\mathcal{F}(x,u,d) \in \mathcal{P}(q^\prime).
\end{array}
\end{equation}
 The set $post(q,u)$ often includes more than one state which results in non-determinism in the finite state transition system. The computation of $post$ operation for a piece-wise affine system with exogenous inputs requires intensive polyhedral operations. On the other hand, based on the sparsity and component-wise monotonicity properties of the traffic networks, authors in \cite{coogan2015signal,coogan2015efficient} have introduced a computationally efficient method that, under some mild assumptions, slightly over-approximates  $post(q,u)$ by the set $\overline{post}(q,u)$, $post(q,u) \subseteq \overline{post}(q,u)$. Finally, all transition relations $\delta(q,u)=\left\{q^\prime \left | \right. q^\prime \in \overline{post}(q,u) \right \}$ are constructed and the finite state abstraction procedure is completed. 
 \begin{proposition}
\label{prop:sim}
The abstarct finite system \emph{simulates} the original system, i.e. any transition in the original system is captured by at least one transition in the abstract system:
\begin{equation}
\begin{array}{l}
\forall x \in \mathcal{X}~ \forall u\in \mathcal{U}~\forall d \in \mathcal{D} ~s.t.~
 x^\prime=\mathcal{F}(x,u,d),\\
\exists q,q^\prime \in Q , x \in \mathcal{P}(q), x^\prime \in \mathcal{P}(q^\prime) ~s.t. ~ q^\prime \in \delta(q,u).
\end{array}
\end{equation}
\end{proposition}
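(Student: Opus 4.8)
The plan is to exploit two facts already established in the excerpt: that the hyper-rectangles $\{\mathcal{P}(q)\}_{q\in Q}$ form a genuine partition of $\mathcal{X}$, and that the transition relation is built from an over-approximation of $post$, so that $post(q,u)\subseteq\delta(q,u)$ for every $q$ and $u$. Given these, the proposition reduces to chasing the definitions of $\mathcal{P}$, $post$, and $\delta$.

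First I would fix arbitrary $x\in\mathcal{X}$, $u\in\mathcal{U}$, $d\in\mathcal{D}$ and set $x'=\mathcal{F}(x,u,d)$, noting that $x'\in\mathcal{X}$ because $\mathcal{F}:\mathcal{X}\times\mathcal{D}\times\mathcal{U}\to\mathcal{X}$. Since the sets $\mathcal{P}(q)$ cover $\mathcal{X}$ and are pairwise disjoint, there is a (unique) $q\in Q$ with $x\in\mathcal{P}(q)$ and a (unique) $q'\in Q$ with $x'\in\mathcal{P}(q')$.

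Next, with this choice of $q$, the point $x$ itself witnesses the defining condition of $post(q,u)$: indeed $x\in\mathcal{P}(q)$, $d\in\mathcal{D}$, and $\mathcal{F}(x,u,d)=x'\in\mathcal{P}(q')$, so $q'\in post(q,u)$. Finally, $post(q,u)\subseteq\overline{post}(q,u)$ and $\delta(q,u)=\{r\mid r\in\overline{post}(q,u)\}$, hence $q'\in\delta(q,u)$, which is precisely the conclusion required (with the same $q$ containing $x$ and the same $q'$ containing $x'$).

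I do not expect a genuine obstacle here; the only point needing care is the justification that $\{\mathcal{P}(q)\}$ is a partition of $\mathcal{X}$ even though the hyper-rectangles do not all contain their boundaries — this is exactly the content of the displayed partition identity recalled above, and it is what guarantees that both $x$ and $x'$ actually land in some cell. (If one worked with the exact $post$ instead of $\overline{post}$, the argument would be word-for-word the same; the over-approximation only enlarges $\delta$ and therefore can only help.)
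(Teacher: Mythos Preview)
Your argument is correct: the partition property gives you the abstract states $q,q'$ containing $x,x'$, the definition of $post$ immediately yields $q'\in post(q,u)$ via the witnesses $x$ and $d$, and the inclusion $post(q,u)\subseteq\overline{post}(q,u)=\delta(q,u)$ finishes it. The paper itself does not supply a proof at all but simply cites \cite{coogan2015signal}; your self-contained chase through the definitions is exactly the kind of argument that citation would unpack, so there is nothing to compare.
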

\begin{proof}
See \cite{coogan2015signal}.
\end{proof}
Note that the simulation property does not state that all trajectories in the abstract system are also present in the original system. In fact, the finite state transition system may include \emph{spurious} trajectories that are not present in the original system. Although the presence of these transitions do not affect safety control, they introduce conservatism in optimal planning.

\subsection{MRCIS for the Abstract System}
In this section, we find the abstract system's MRCIS, denoted by $Q^{\mathcal{I}}$, in the safe set $Q^{\mathcal{S}}$. The analogous of Algorithm \ref{alg:mrcis} for the MRCIS of a finite system  is also known as the ``safety game" \cite{tabuada2009verification}, which is based on iteratively removing the states that are absorbed into the unsafe set for all controls. The procedure is outlined in Algorithm \ref{alg:fmrcis}.
\begin{algorithm}
\caption{Procedure for MRCIS $ Q^{\mathcal{I}}$ inside $ Q^\mathcal{S}$}\label{alg:fmrcis}
\begin{algorithmic}[1]
   \State $Q^{\mathcal{I}}= Q^\mathcal{S}$
   \While{$ Q^{a} \neq \emptyset$ }
      \State $ Q^{a}=\emptyset$
      \For {$ q \in  Q^{\mathcal{I}}$}
     		\If  { $\forall u\in \mathcal{U} ~\delta( q, u) \not \subseteq  Q^{\mathcal{I}}$  }
			\State {$ Q^a \gets  Q^a \cup  q$}
      \EndIf
      \EndFor
      \State $ Q^{\mathcal{I}} \gets  Q^{\mathcal{I}} \setminus  Q^{a}$
   \EndWhile
   \State \textbf{return} $ Q^\mathcal{I}$
\end{algorithmic}
\end{algorithm}
We define the hyper-rectangle representation of $Q^\mathcal{I}$ as
\begin{equation}
\mathcal{\tilde I}:= \bigcup_{q \in \mathcal{Q}^\mathcal{I}} \mathcal{P}(q).
\end{equation}
Since $Q^\mathcal{I} \subseteq Q^\mathcal{S}$, it is clear that $\mathcal{\tilde I} \subseteq \mathcal{S}$.
\begin{theorem}
$\mathcal{\tilde I}$ is a robust controlled invariant set of the original system \eqref{eq:dynamics}.
\end{theorem}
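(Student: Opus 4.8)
The plan is to show that $\mathcal{\tilde I}$ satisfies the definition of a robust controlled invariant set for the original dynamics \eqref{eq:dynamics}: given any $x \in \mathcal{\tilde I}$, we must exhibit a control $u \in \mathcal{U}$ such that $\mathcal{F}(x,u,d) \in \mathcal{\tilde I}$ for every $d \in \mathcal{D}$. The natural strategy is to transfer the invariance property that Algorithm \ref{alg:fmrcis} guarantees for $Q^{\mathcal{I}}$ in the abstract system down to the concrete system, using the simulation relation of Proposition \ref{prop:sim} and the fact that the hyper-rectangles $\{\mathcal{P}(q)\}_{q\in Q}$ partition $\mathcal{X}$.

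First I would record the key property of the set $Q^{\mathcal{I}}$ returned by Algorithm \ref{alg:fmrcis}: upon termination, $Q^{a}=\emptyset$, which means that for every $q \in Q^{\mathcal{I}}$ there exists some $u_q \in \mathcal{U}$ with $\delta(q,u_q) \subseteq Q^{\mathcal{I}}$. (This is precisely the finite-system analogue of robust controlled invariance, since $\delta(q,u)$ over-approximates all one-step successors under all adversarial inputs.) Next, take an arbitrary $x \in \mathcal{\tilde I}$. Because the $\mathcal{P}(q)$ form a partition of $\mathcal{X}$ and $\mathcal{\tilde I} = \bigcup_{q \in Q^{\mathcal{I}}} \mathcal{P}(q)$, there is a unique $q \in Q^{\mathcal{I}}$ with $x \in \mathcal{P}(q)$. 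Choose the control $u := u_q$ from the previous step. Now fix any $d \in \mathcal{D}$ and let $x' := \mathcal{F}(x,u,d)$; by the partition property there is a unique $q' \in Q$ with $x' \in \mathcal{P}(q')$. By Proposition \ref{prop:sim} (the simulation property), applied with this $x$, $u$, $d$, the transition $x \mapsto x'$ is captured in the abstract system: there exist abstract states $\hat q, \hat q'$ with $x \in \mathcal{P}(\hat q)$, $x' \in \mathcal{P}(\hat q')$ and $\hat q' \in \delta(\hat q, u)$. Since the partition is disjoint, $\hat q = q$ and $\hat q' = q'$, so $q' \in \delta(q,u) \subseteq Q^{\mathcal{I}}$. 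Hence $x' \in \mathcal{P}(q') \subseteq \mathcal{\tilde I}$. As $d \in \mathcal{D}$ was arbitrary, $\mathcal{F}(x,u,d) \in \mathcal{\tilde I}$ for all $d \in \mathcal{D}$, which is exactly the required condition.

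The only subtle point — and the step I would be most careful about — is the correct use of the simulation property in Proposition \ref{prop:sim}. The statement there quantifies existentially over the abstract states $q, q'$, so a priori one only gets \emph{some} pair of abstract states witnessing the transition; it is the disjointness of the partition $\{\mathcal{P}(q)\}$ that pins these down to the unique cells containing $x$ and $x'$, letting us conclude $q' \in \delta(q,u)$ for the specific $q$ whose control $u_q$ we selected. One should also note that a control $u_q$ valid for the cell $q$ is valid uniformly for every concrete state in $\mathcal{P}(q)$, since $\delta(q,u_q)$ by construction over-approximates the successors of all such states under all disturbances; this is what makes the choice $u = u_q$ legitimate regardless of which $x \in \mathcal{P}(q)$ we started from. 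No polyhedral computation or finite-termination argument is needed here — Algorithm \ref{alg:fmrcis} runs over a finite set and hence terminates, and the theorem concerns only the invariance of its output.
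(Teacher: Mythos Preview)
Your proof is correct and follows essentially the same approach as the paper: both use the termination property of Algorithm~\ref{alg:fmrcis} (every $q\in Q^{\mathcal I}$ admits some $u$ with $\delta(q,u)\subseteq Q^{\mathcal I}$) together with the simulation relation (Proposition~\ref{prop:sim}) and the disjointness of the partition to conclude. The only cosmetic difference is that the paper argues by contradiction while you argue directly; your explicit remark that disjointness of the $\mathcal P(q)$ pins down the existentially quantified abstract states in Proposition~\ref{prop:sim} is a nice clarification that the paper leaves implicit.
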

\begin{proof}(proof by contradiction)
Suppose that $\mathcal{\tilde I}$ is not a robust controlled invariant set. Then 
$$\exists x \in \mathcal{\tilde I}~s.t.~ \forall u \in \mathcal{U}~\exists d \in \mathcal{D}~s.t.~ x^\prime=\mathcal{F}(x,u,d) \not\in \mathcal{I}.$$
In words, there exists a state $x$ such that for all available controls, there exist an exogenous input that the next step state value $x^\prime$ is outside of the set $\mathcal{\tilde I}$.
Let $q$ and $q^\prime$ be the abstract states that $x \in \mathcal{P}(q), x^\prime \in \mathcal{P}(q^\prime)$. 
The above expression is equivalent to: 
\begin{equation*}
\begin{array}{l}
\exists q \in Q^\mathcal{I} ~s.t.~\exists x \in \mathcal{P}(q) ~s.t.~ \forall u \in \mathcal{U}~\exists d \in \mathcal{D}~ \\
 ~ s.t. ~ \exists q^\prime \not \in Q^\mathcal{I}  ~ s.t. ~ \exists x^\prime \in \mathcal{P}(q^\prime), x^\prime=\mathcal{F}(x,u,d).
\end{array}
\end{equation*}
We claim that $q^\prime \in \delta(q,u)$. If $q^\prime \not \in \delta(q,u)$, then there exists a state $x \in \mathcal{P}(q)$ such that $\exists u \in \mathcal{U}, \exists d \in \mathcal{D}$ that $x^\prime = \mathcal{F}(x,u,d)$ and $x^\prime \in \mathcal{P}(q^\prime)$ where $q^\prime \not \in \delta(q,u)$, which violates the simulation property (Proposition \ref{prop:sim}). Therefore, $q^\prime \in \delta(q,u)$ hence
\begin{equation*}
\exists q \in Q^\mathcal{I} ~s.t. ~\forall u\in \mathcal{U}~ \exists q^\prime \not \in Q^\mathcal{I} ~s.t.~ q^\prime \in \delta(q,u),
\end{equation*} 
which is equivalent to
\begin{equation*}
\exists q \in Q^\mathcal{I}~ s.t. ~ \forall u\in \mathcal{U} ~\delta( q, u) \not \subseteq  Q^{\mathcal{I}},
\end{equation*}
that is in contradiction with lines 5 to 9 of Algorithm \ref{alg:fmrcis} that states such an abstract state is removed from $ Q^\mathcal{I}$. Therefore, $\mathcal{\tilde I}$ is a robust controlled invariant set. 
\end{proof}
\begin{corollary}
The abstract system's MRCIS is a subset of the original system's MRCIS. i.e.
$\mathcal{\tilde I} \subseteq \mathcal{I}.
$
\end{corollary}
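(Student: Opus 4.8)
The plan is to derive this as an immediate consequence of the preceding Theorem together with the maximality characterization of $\mathcal{I}$. Recall from the discussion following the definition of robust controlled invariance that the MRCIS $\mathcal{I}$ of the original system inside $\mathcal{S}$ is the \emph{union of all} robust controlled invariant sets contained in $\mathcal{S}$ (this is the property cited from \cite{kerrigan2001}, and it is also the fixed point returned by Algorithm \ref{alg:mrcis}). Consequently, to conclude $\mathcal{\tilde I} \subseteq \mathcal{I}$ it suffices to check that $\mathcal{\tilde I}$ belongs to that family, i.e., that it is (i) robust controlled invariant for the dynamics \eqref{eq:dynamics} and (ii) a subset of $\mathcal{S}$.

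Property (i) is exactly the statement of the Theorem proved immediately above, so nothing further is needed there. Property (ii) was already recorded just before that Theorem: since $Q^\mathcal{I} \subseteq Q^\mathcal{S}$, we have $\mathcal{\tilde I} = \bigcup_{q \in Q^\mathcal{I}} \mathcal{P}(q) \subseteq \bigcup_{q \in Q^\mathcal{S}} \mathcal{P}(q) = \mathcal{S}$, the last equality being the earlier observation that the safe set decomposes exactly along abstract cells labeled $safe$. Putting (i) and (ii) together, $\mathcal{\tilde I}$ is one of the robust controlled invariant subsets of $\mathcal{S}$ whose union defines $\mathcal{I}$, and therefore $\mathcal{\tilde I} \subseteq \mathcal{I}$.

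There is essentially no obstacle here; the only point worth double-checking is that the maximality fact we invoke is the ``union of all such sets'' version rather than merely ``a set maximal under inclusion'' — the former is what was stated (and cited) earlier, so the argument goes through verbatim. As an alternative that avoids citing maximality, one could run Algorithm \ref{alg:mrcis} and Algorithm \ref{alg:fmrcis} in parallel and show by induction on the iteration index $i$ that $\bigcup_{q \in Q^{\mathcal{I}_i}} \mathcal{P}(q) \subseteq \mathcal{I}_i$, using the simulation property (Proposition \ref{prop:sim}) to argue that any cell removed by the finite safety game corresponds to states that Algorithm \ref{alg:mrcis} also removes; but this is strictly more work than needed given the Theorem, so I would not pursue it.
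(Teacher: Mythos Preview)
Your proof is correct and follows essentially the same route as the paper: the paper's one-line argument invokes the maximality and uniqueness of $\mathcal{I}$ given that $\mathcal{\tilde I}$ is robust controlled invariant (from the preceding Theorem), and you spell out the same reasoning, adding the explicit check that $\mathcal{\tilde I} \subseteq \mathcal{S}$, which the paper leaves implicit.
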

\begin{proof}
Since $\mathcal{I}$ is a the maximum robust controlled invariant set and unique, $\mathcal{\tilde I} \subseteq \mathcal{I}
$. 
\end{proof}

\subsection{Recursive Feasibility}
Now we prove that replacing the terminal set with a robust controlled invariant set guarantees recursive feasibility.  
\begin{theorem}
The following MPC optimization problem:
\begin{equation}
\label{eq:optimterminal}
\begin{array}{cl}
u^{H,opt}[t] =& \text{argmin}~  \sum \limits_{\tau=t+1}^{t+H} \sum \limits_{l \in \mathcal{L}} x^e_l[\tau]
\\
\text{s.t} & x[\tau] \in  \mathcal{S}, \\
& x[t+H] \in \mathcal{\tilde I}, \\
 &  x[\tau+1] = \mathcal{F}(x[\tau],d[\tau],u[\tau]), \\
   & x^e[\tau+1] =\mathcal{F}(x[\tau],d^e[\tau],u[\tau]), \\
 &   \forall d[\tau] \in \mathcal{D}, \tau=t,\dots, t+H-1. \\
\end{array}
\end{equation}
is recursively feasible.
\end{theorem}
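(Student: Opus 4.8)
The plan is to use the classical ``shift-and-append'' argument for recursive feasibility of MPC with a terminal constraint, adapted to the robust setting. Assume the problem \eqref{eq:optimterminal} is feasible at time $t$, with optimizer $u^{H,opt}[t]=\{u^*[t],u^*[t+1],\dots,u^*[t+H-1]\}$. We apply $u^*[t]$, an adversarial input $d[t]\in\mathcal{D}$ is realized, and the system moves to $x[t+1]=\mathcal{F}(x[t],d[t],u^*[t])$. I would then exhibit an explicit feasible candidate for the problem at time $t+1$, namely the shifted sequence extended by one terminal control, $\{u^*[t+1],\dots,u^*[t+H-1],\bar u\}$, and verify that it satisfies every constraint in \eqref{eq:optimterminal} for all admissible disturbance sequences.

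For the first $H-1$ controls this is a bookkeeping argument: any trajectory generated from $x[t+1]$ by $\{u^*[t+1],\dots,u^*[t+H-1]\}$ under a disturbance sequence $d[t+1],\dots,d[t+H-1]\in\mathcal{D}$ is exactly the tail $x[t+1],\dots,x[t+H]$ of a trajectory that is feasible for the time-$t$ problem (the one whose first disturbance is the realized $d[t]$). Robust feasibility at time $t$ therefore gives $x[\tau]\in\mathcal{S}$ for $\tau=t+1,\dots,t+H-1$ and $x[t+H]\in\mathcal{\tilde I}$; since $\mathcal{\tilde I}\subseteq\mathcal{S}$, the safe-set constraint also holds at $\tau=t+H$. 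It remains to pick $\bar u$ so that the new terminal constraint $x[t+1+H]\in\mathcal{\tilde I}$ holds: because $x[t+H]\in\mathcal{\tilde I}$ and $\mathcal{\tilde I}$ is robust controlled invariant (the preceding theorem), there exists $\bar u\in\mathcal{U}$ with $\mathcal{F}(x[t+H],\bar u,d)\in\mathcal{\tilde I}$ for all $d\in\mathcal{D}$, so taking $u[t+H]=\bar u$ closes the argument. Together with a base case — feasibility at $t=0$, i.e.\ $x[0]$ lies in the set of states from which \eqref{eq:optimterminal} is feasible — an induction on $t$ then yields recursive feasibility.

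The point requiring the most care is that $x[t+H]$ is not a single state but depends on the future disturbance realization $d[t+1],\dots,d[t+H-1]$; in general it ranges over a subset of $\mathcal{\tilde I}$. Robust controlled invariance only furnishes, for each individual point of $\mathcal{\tilde I}$, \emph{some} control that keeps that point inside $\mathcal{\tilde I}$, whereas an open-loop terminal control $\bar u$ would have to work simultaneously for the whole reachable subset. I would resolve this either by letting the terminal control be chosen as a state feedback $\bar u=\kappa(x[t+H])$ — legitimate because the MPC is re-solved at every step and only the first control is ever applied — or by invoking the paper's standing assumption that the disturbance over the prediction horizon is known online, which collapses $x[t+H]$ to a single point of $\mathcal{\tilde I}$. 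Making this choice precise, and keeping straight which disturbances are ``frozen'' versus ``free'' when relating the time-$t$ and time-$(t+1)$ problems, is where I expect the real effort to lie; the invariance and subset facts themselves do all the conceptual work.
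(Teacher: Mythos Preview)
Your proposal is correct and follows essentially the same shift-and-append argument as the paper: take the tail of the time-$t$ optimizer and append a terminal control furnished by robust controlled invariance of $\mathcal{\tilde I}$. You are in fact more careful than the paper on the point that $x[t+H]$ is set-valued under open-loop robust prediction; the paper's proof simply writes ``for any $x[t+H]\in\mathcal{\tilde I}$ there exists at least one control $\tilde u$'' and appends this $\tilde u$ without discussing whether a single open-loop $\tilde u$ suffices for the whole reachable set --- a subtlety you correctly flag and whose resolutions you outline.
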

\begin{proof}
Suppose that the solution to the optimization problem above is found as $u^{H,opt}[t]=\left \{ u^{t}[t], u^{t}[t+1], \cdots, u^{t}[t+H-1] \right \}$, where $u^{t}[t]$ is applied to the system.  
We need to prove that the feasible set of the optimization problem of $u^H[t+1]$ is nonempty. It is known from the optimization problem of $u^H[t]$ that applying the rest of the open-loop sequence $\left \{ u^{t}[t+1], u^{t}[t+2]\cdots, u^{t}[t+H-1] \right \}$ will result $x[t+H]$ to be inside $\mathcal{\tilde I}$. Since $\mathcal{\tilde I}$ is a robust controlled invariant set, for any $x[t+H] \in \mathcal{\tilde I}$ there exists at least at least one control $\tilde u$ that guarantees $\mathcal{F}(x[t+H],\tilde u,d) \in \mathcal{\tilde I}~ \forall d \in \mathcal{D}$. Therefore, $u^H[t+1]=\left \{ u^{t}[t+1], u^{t}[t+2]\cdots, u^{t}[t+H-1], \tilde u \right \}$ satisfies the constraints of optimization problem of $u^H[t+1]$ which hence guarantees recursive feasibility. 
\end{proof}
Finally, we have found a solution to Problem \ref{problem:terminal} and we let the terminal constraint to be:
\begin{equation}
\mathcal{T}=\mathcal{\tilde I}.
\end{equation}

\subsection{Discussions}
\label{sec:discussion1}
Algorithm \ref{alg:fmrcis} may end with an empty set which, unfortunately, does not conclude that $\mathcal{I}$ is also empty. In this case, finer partitions are required to find a nonempty invariant set. Furthermore, refinement may also enlarge $\mathcal{\tilde I}$ if it is already nonempty, which leads to less conservative hence more control options. Note that the refinement of partitions outside $Q^\mathcal{S}$ is not necessary. However, the number of partitions and transitions in the abstract system grows exponentially with respect to the network size. Therefore, our approach to safety control is restricted to small networks.

We need to also address an important issue regarding initial feasibility of the MPC optimization problem. Since the controls are open-loop, no initially feasible problem for large horizons may exist. In the most extreme case, an initially feasible optimization problem may be only available for $H=1$ (where the trajectories should always remain in the invariant set $\mathcal{\tilde I}$). Otherwise, if the initial MPC optimization problem is infeasible, for instance $x$ is initially outside $\mathcal{S}$, one can solve the ``reachability" game  \cite{tabuada2009verification} to guide the state of the system into $\mathcal{\tilde I}$ and later start implementing the MPC. It is worth to note that the reachability game may also be infeasible from some initial conditions.

\section{Model Predictive Control}
\label{sec:control}
In this section, we provide the solution to Subproblem \ref{problem:mpc} based on MILP formulation of the optimization problem \eqref{eq:optimterminal}. We also discuss the limitations of the MILP approach.
\subsection{Mixed Logical Representations of Traffic Networks}
The traffic network dynamics \eqref{eq:dynamics} is a hybrid system that falls into the class of mixed logical dynamical (MLD) systems which can be encoded using mixed integer constraints \cite{bemporad1999control}. Formally, we have:
\begin{proposition}
The traffic dynamics \eqref{eq:dynamics} can be formulated as a finite set of mixed integer constraints:
\begin{equation}
\label{eq:mixed}
\begin{array}{ll}
x[t+1]=  & \mathcal{F} \left(x[t],d[t],u[t]\right) \\
\Leftrightarrow  & x[t+1]+ E_x x[t] + E_u u[t]  + E_d d[t] + E_z z[t] + E_\delta \delta[t] \le e,
\end{array}
\end{equation}
where $z[t]$ is the vector of auxiliary continuous variables, $\delta[t]$ is the vector of auxiliary binary variables, $E_x$, $E_u$, $E_d$, $E_z$, $E_\delta$ are appropriately defined constant matrices and $e$ is a vector that is defined such that  the set of mixed-integer constraints is \emph{well posed}, i.e. for given values of $x[t],u[t]$ and $d[t]$, the feasible set of $x[t+1]$ is a single point. 
\end{proposition}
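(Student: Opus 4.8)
The plan is to invoke the standard \emph{mixed logical dynamical} (MLD) translation of Bemporad and Morari \cite{bemporad1999control}: any piecewise-affine map whose branches are selected by comparing affine expressions, and any product of a binary variable with a bounded affine expression, can be rewritten \emph{exactly} as a conjunction of linear inequalities over the original variables together with finitely many fresh auxiliary continuous and binary variables. Since $\mathcal{F}$ in \eqref{eq:dynamics} is assembled componentwise by composing finitely many $\min$ operations and finitely many products $u_l[t]\cdot(\cdot)$, applying these translations link by link and stacking the resulting inequalities yields \eqref{eq:mixed}; finiteness is automatic because $\mathcal{L}$ is finite.

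First I would handle the inner flow terms. For each $l$, introduce a continuous auxiliary variable $g_l[t]$ intended to equal $\min\{x_l[t], c_l, \min_{k\in\mathcal{L}_l^{down}} (\alpha_{lk}/\beta_{lk})(x_k^{cap}-x_k[t])\}$, i.e. the minimum of finitely many affine functions $a_j[t]$ of $x[t]$ (using $\beta_{lk}>0$ for $k\in\mathcal{L}_l^{down}$, else the term is dropped). Encode $g_l[t]$ by $g_l[t]\le a_j[t]$ for all $j$, a binary selector vector $\sigma_l[t]$ with $\sum_j \sigma_{l,j}[t]=1$, and $g_l[t]\ge a_j[t]-M_l(1-\sigma_{l,j}[t])$, where $M_l$ upper-bounds $\max_j a_j[t]-\min_j a_j[t]$ over $\mathcal{X}$; such a bound exists since $\mathcal{X}$ is a product of compact intervals. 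Next encode $f_l[t]=u_l[t]\,g_l[t]$ with $u_l[t]\in\{0,1\}$ and $g_l[t]\in[0,\bar g_l]$ by the four standard big-$M$ inequalities $f_l[t]\le\bar g_l\,u_l[t]$, $f_l[t]\ge 0$, $f_l[t]\le g_l[t]$, $f_l[t]\ge g_l[t]-\bar g_l(1-u_l[t])$.

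Then I would handle the outer capacity saturation. The quantity $h_l[t]:=x_l[t]-f_l[t]+\sum_{i\in\mathcal{L}_l^{up}}\beta_{il}f_i[t]+d_l[t]$ is affine in the already-introduced variables, and $x_l[t+1]=\min\{h_l[t],x_l^{cap}\}$ is again a minimum of two affine expressions, encoded exactly as above with one further binary per link. Collecting all fresh continuous variables $g_l[t],f_l[t]$ into $z[t]$ and all fresh binaries $\sigma_l[t]$ and the saturation selectors into $\delta[t]$, and writing each equality as a pair of inequalities, every constraint takes the claimed affine form, which defines $E_x,E_u,E_d,E_z,E_\delta,e$. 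Well-posedness follows by resolving the auxiliary variables in the fixed topological order $g_l\to f_l\to h_l\to x_l[t+1]$: for fixed $x[t],u[t],d[t]$ the $g$-encoding forces $g_l[t]$ to the stated minimum (ties leave $\sigma_l[t]$ nonunique but not $g_l[t]$), the product inequalities then force $f_l[t]\in\{0,g_l[t]\}$ according to $u_l[t]$, $h_l[t]$ is thereby determined, and the outer min-encoding forces a unique $x_l[t+1]$; hence the feasible set of $x[t+1]$ is a single point.

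The main obstacle I anticipate is not a deep step but the bookkeeping that makes the big-$M$ constants simultaneously \emph{valid} and \emph{finite}: one must exhibit explicit bounds on every relevant affine expression over $\mathcal{X}\times\mathcal{D}\times\mathcal{U}$, which uses boundedness of $\mathcal{D}$ (an assumption) and a little care with the turning/capacity ratios; and one must justify that chaining the individually well-posed gadgets preserves well-posedness, i.e. that the dependency graph of the auxiliary variables is acyclic—which holds here because each $f_i$ entering $h_l$ depends only on $x[t]$ and $u[t]$.
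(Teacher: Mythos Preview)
Your proposal is correct and follows essentially the same constructive route as the paper: introduce a continuous auxiliary for the inner $\min$ in \eqref{eq:flow_out}, encode it with big-$M$ inequalities and binary selectors, encode the product $u_l\cdot(\cdot)$ with the four standard inequalities, form the affine pre-saturation quantity, and encode the outer $\min$ with one more binary per link. The only cosmetic differences are variable names and the sign convention on the selector binaries; your added remarks on explicit big-$M$ bounds and the acyclic dependency order for well-posedness are, if anything, slightly more careful than the paper's own exposition.
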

\begin{proof}
The proof is constructive. 
We define the auxiliary continuous variables $z_l[t], l \in \mathcal{L}$: 
\begin{equation}
z_l[t]:= \min \left\{ x_l[t], c_l, \min_{k\in \mathcal{L}_l^{down}} \frac{\alpha_{lk}}{\beta_{lk}} (x_k^{cap}-x_k[t])  \right\}.\end{equation}
Eqn. \eqref{eq:flow_out} is written as:
\begin{equation}
\label{eq:y}
f_l[t]=u_l[t] z_l[t],
\end{equation}
where the integer constraints encoding $z_l[t]$ are:
\begin{equation}
\label{eq:outflow}
\left\{
\begin{array}{l}
z_l[t] \le x_l[t], z_l[t] \ge x_l[t] - M \delta_{l,1},
\\
z_l[t] \ge c_k - M \delta_{l,2}, z_l[t] \le c_l,
\\
z_l[t] \le \frac{\alpha_{lk}}{\beta_{lk}} (x_l^{cap}-x_l[t]), \forall k \in \mathcal{L}_l^{down} ,
\\
z_l[t] \ge \frac{\alpha_{lk}}{\beta_{lk}} (x_l^{cap}-x_l[t]) -M \delta_{l,k},\forall k \in \mathcal{L}_l^{down},\\
\delta_{l,i} \in \{0,1\},
\\
\delta_{l,1} + \delta_{l,2} + \displaystyle \sum \limits_{k \in \mathcal{L}_l^{down}} \delta_{l,k} = 2+\left | \mathcal{L}_k^{down} \right |,
\\
\end{array}
\right.
\end{equation}
where $M$ is a sufficiently large constant such that
 $$
M \geq \frac{\alpha_{kl}}{\beta_{kl}} x_l^{cap}~~ \forall l \in \mathcal{L}, k \in  \mathcal{L}_l^{down},
$$
and $\delta_{l,i}$'s are auxiliary binary variables. 
The relations above ensure that the flow satisfies the $\min$ argument in \eqref{eq:flow_out}. Next, relation \eqref{eq:y} is encoded as:
\begin{equation}
\left\{
\begin{array}{l}
f_l[t] \le c_l u_l[t],
 f_l[t] \ge 0,
\\
f_l[t] \le z_l[t],
f_l[t] \ge z_l[t] - c_l u_l[t],
\\
u_l[t] \in \left \{0,1 \right\},
\end{array}
\right.
\end{equation}
where $c_l \ge z_l[t]$ is enforced in \eqref{eq:outflow}. We also define
\begin{equation}
z^\prime_l[t]:= x_l[t]-f_l[t]+\underset{i \in \mathcal{L}_l^{up} } \sum \beta_{il} f_i^{}[t] + d_l[t],
\end{equation}
which is linear in terms introduced before. Finally
\begin{equation}
\left\{
\begin{array}{l}
\mathcal{F}_l[t] \le z_l^\prime[t],
 \mathcal{F}_l[t] \le x_l^{cap},
\\
\mathcal{F}_l[t] \ge z_l^\prime[t] - M^\prime \delta_{z_l^\prime},
\\
\mathcal{F}_l[t] \ge x_l^{cap}-M^\prime (1-\delta_{z_l^\prime}),
\\
\delta_{z^\prime} \in \left \{0,1\right \},
\end{array}
\right.
\end{equation}
where $M^\prime \ge \max \limits_{l\in \mathcal{L}} x_l^{cap} $ is a sufficiently large constant. Since $\mathcal{U}$ is also encoded by integer constraints (as explained in Sec. \ref{sec:model}), all the dynamical constraints can be described as mixed integer constraints that in a compact form are written as \eqref{eq:mixed}.
\end{proof}

\subsection{Robust MPC}
A solution to the MPC optimization problem requires that the safety and terminal constraints are satisfied for all allowable adversarial inputs $d \in \mathcal{D}$. In this section, we briefly explain how to characterize the $H$-step reachable sets.  
\subsubsection{One-step reachable set}
\label{sec:one}
We assume the set $\mathcal{D}$ is given as a union of hyper-rectangles:
\begin{equation}
\mathcal{D} = \bigcup_{i_d=1} D_{i_d}, i_d=1,\cdots, n_\mathcal{D},
\end{equation}
such that each $D_{i_d}$ is a hyper-rectangle in the form of  
$\left \{ d \left | \right. \underline{d}^{i_d} \le d \le \overline{d}^{i_d} \right \},$
 where the inequalities are interpreted element-wise. Note that this assumption is not restrictive as one may over-approximate any bounded set by hyper-rectangles. Given control $u[\tau]$, we wish to compute the one-step reachable set of hyper-rectangle $\prod_{l \in \mathcal{L}} \left[\underline{x}_l[\tau], \overline{x}_l[\tau]\right]$.
 \begin{assumption}
\label{assume:c}
For all $ l \in \mathcal{L}, i \in \mathcal{L}_l^{up}$, the following inequality holds:
\begin{equation}
x_l^{cap} \geq c_l + c_i \frac{\beta_{il}}{\alpha_{il}}.
\end{equation}
\end{assumption}

The assumption above is satisfied if the maximum flow rates are small enough, which is physically accomplished when the length of the discrete time steps are small. It is shown in \cite{coogan2015signal} that, under Assumption \ref{assume:c}, an increase in $x_l[t]$ leads to an increase in $x_l[t+1]$. Furthermore, $x_l[t+1]$ is monotonically increasing (decreasing) with respect to the number of vehicles on its downstream and upstream (adjacent) links. Using this \emph{component-wise monotonicity} property, we have:
\begin{equation}
\label{eq:over}
\begin{array}{ll}
\overline{x}^{i_d}_l[\tau+1]= & \max \Big\{ \overline{x}_l^{i_d}[\tau]- 
 u_l[\tau] \min \big \{ \overline{x}_l^{i_d}[\tau], {c}_l, \underset{l\in \mathcal{L}_k^{down}} \min  \frac{{\alpha}_{lk}}{{\beta}_{lk}} ({x}_k^{cap}-\overline{x}^{i_d}_k[\tau])  \big \} \\ &
 + \underset{ i \in \mathcal{L}_k^{up} } \sum {\beta}_{il} u_i[\tau] \min \Big \{  \overline{x}_i^{i_d}[\tau],{c}_i, \frac{{\alpha}_{il}}{{\beta}_{il}} ({x}_l^{cap} 
-\overline{x}^{i_d}_l[\tau]), \\ &
  \underset{j\in \mathcal{L}_l^{adj}} \min \frac{{\alpha}_{ij}}{{\beta}_{ij}} ({x}_j^{cap}-\underline{x}_j^{i_d}[\tau]) \big \}  + \overline{d}_l^{i_d}[\tau], x_l^{cap} \Big \},
\end{array}
\end{equation}
\begin{equation}
\label{eq:under}
\begin{array}{ll}
\underline{x}^{i_d}_l[\tau+1]=  & \max \Big\{ \underline{x}_l^{i_d}[\tau]- 
 u_l[\tau] \min \big \{ \underline{x}_l^{i_d}[\tau], {c}_l, \underset{l\in \mathcal{L}_k^{down}} \min  \frac{{\alpha}_{lk}}{{\beta}_{lk}}  ({x}_k^{cap}-\underline{x}^{i_d}_k[\tau])  \big \}  + \\
&   \underset{ i \in \mathcal{L}_k^{up} } \sum {\beta}_{il} u_i[\tau] \min \Big \{ \underline{x}_i^{i_d}[\tau],{c}_i, \frac{{\alpha}_{il}}{{\beta}_{il}} ({x}_l^{cap} 
-\underline{x}^{i_d}_l[\tau]), \\ &
  \underset{j\in \mathcal{L}_l^{adj}} \min \frac{{\alpha}_{ij}}{{\beta}_{ij}} ({x}_j^{cap}-\overline{x}_j^{i_d}[\tau]) \big \}  + \underline{d}_l^{i_d}[\tau], x_l^{cap} \Big \},
\end{array}
\end{equation}
%
%
where lower (over) bars stand for lower (upper) bounds of the values $x$ and $d$, and superscript $i_d$ stands for the values obtained from exogenous inputs in hyper-rectangle $\mathcal{D}_{i_d}$. We denote:
\begin{equation}
\mathcal{R}_{i_d}(\left[\underline x[\tau],\overline x[\tau]\right],u[\tau])=\prod_{l \in \mathcal{L}} \left[\underline{x}_l^{i_d}[\tau+1],\overline{x}_l^{i_d}[\tau+1]\right].
\end{equation}
The one step reachable set is thus given by an union of hyper-rectangles:
\begin{equation}
\label{eq:one}
\mathcal{R}^{(1)}(\underline x[\tau],\overline x[\tau],u[\tau])=\bigcup_{i_d=1}^{n_\mathcal{D}} \mathcal{R}_{i_d}(\left[\underline x[\tau],\overline x[\tau]\right],u[\tau]).
\end{equation}

\subsubsection{$H$-step reachable set}
Using \eqref{eq:one}, the $H$-step reachable is given by:
\begin{small}
\begin{equation}
\label{eq:hstep}
\mathcal{R}^{(H)}([\underline x[t],\overline x[t]],u^H[t])=
\bigcup_{{i^H_d}=1}^{n_\mathcal{D}} {\tiny{\ldots}} \bigcup_{{i^1_d}=1}^{n_\mathcal{D}}\mathcal{R}_{{i^H_d}} \left (\left [\ldots \left [  \mathcal{R}_{{i^1_d}} (\left[\underline x[t],\overline x[t]\right],u[t])  \right] \ldots \right ] , u[t+H-1]\right),
\end{equation}
\end{small}
where $\underline x[t]=\overline x[t]$. Note that we may assume bounded uncertainties in online measurements of $x[t]$ but this may cause issues with recursive feasibility guarantee for the MPC problem
 \footnote{In short, uncertainty in $x[t]$ can be considered up to a level that the current abstract state $q$, which $x[t] \in \mathcal{P}(q)$, can be uniquely determined.}.
The number of mixed logical equations in the form of \eqref{eq:mixed} required to encode the $H$-step reachable set is $2 n_\mathcal{D}^H$ (one for each set of \eqref{eq:over}, \eqref{eq:under}). The MILP problem becomes quickly intractable if $n_\mathcal{D}>1$. Therefore, MILP implementation requires $\mathcal{D}$ to be given by a single hyper-rectangle.  
\begin{remark}
The network dynamics is \emph{monotonic} if no link in the model has an adjacent link. Many models for single arterials with side streets fall into this category. In this case, only the upper bounds $\overline{x}_l$ and $\overline{d}$ are required to compute the $H$-step evolution, which significantly reduces the MILP size.  
\end{remark}

\subsection{Non-Convex Sets and Additional Integer Constraints}
The constraints corresponding to the safe set $\mathcal{S}$ and terminal set $\mathcal{T}$ also can be encoded as mixed integer constraints. We do not explain this method as encoding non-convex sets using integer constraints is a well known straightforward procedure \cite{bertsimas1997introduction}. The main issue  is that the number of integer constraints describing the sets can be very large, and hence the MILP formulation becomes impractical. One approach to overcome this issue is considering the set constraints as \emph{lazy constraints} \cite{optimization2014inc}, i.e. adding them to the MILP problem if they are violated by the relaxed solution. However, this approach may still require incorporating all the constraints.

\subsection{Discussions}
The computational complexity of MILPs grow exponentially with respect to the number of integer constraints. Therefore, practical implementation of MILP in our framework is restricted to simple problems. 

Since the set of controls is finite, a much simpler approach is full enumeration of the all controls over $H$-step horizon. This approach is computationally tractable if the size of $\mathcal{U}$ and the horizon $H$ are small, which is the case in small networks. Using full enumeration, we may consider arbitrarily complex safe sets. 

\section{Example}
We implemented our methods on a simple urban traffic network. The results are presented in this section. Consider a network with $9$ links and $3$ intersections, as illustrated in Fig. \ref{fig:network}. The parameters of the model are given as
\begin{equation*}
\begin{array}{c}
x_1^{cap}=x_2^{cap}=x_3^{cap}=x_4^{cap}=x_5^{cap}=x_6^{cap}=55, \\
x_8^{cap}=x_9^{cap}=40, c_7=c_7=c_9=15\\
c_1=c_2=c_2=c_4=c_5=c_6=20, \\ 
\beta_{12}=\beta_{23}=\beta_{45}=\beta_{56}=0.7,\beta_{72}=0.5,\\
\beta_{86}=\beta_{83}=0.4,\beta_{95}=0.3, 
\end{array}
\end{equation*}
and all capacity ratios are one. 
The safety set is given by the boolean expression 
$\phi_1 \wedge \phi_2 \wedge \phi_3 \wedge \phi_4,
$ where
\begin{equation*}
\begin{array}{c}
\phi_1=(x_1 \le 36) \wedge (x_4 \le 36),
\\
\phi_2=(x_2 \le 44) \vee (x_3 \le 44),
\\
\phi_3=(x_5 \le 44) \vee (x_6 \le 44),
\\
\phi_4=(x_7 \le 32) \vee (x_8 \le 32) \vee (x_9 \le 32).
\end{array}
\end{equation*}
In plain English, $\phi_1$ requires that the entry arterials $1,4$ are never congested, $\phi_2$ and $\phi_3$  state that if the  traffic on a link in the mid-corridor is heavy, the other is light.  $\phi_4$ prevents the entry side links $7,8,9$ from being congested simultaneously. We assume that each intersection has two modes of controls corresponding to the horizontal and vertical flows, $\left |\mathcal{U} \right|=2^3=8$. The set characterizing exogenous inputs (arrival rates) is $\mathcal{D}=\left\{d \left | \right.  {\bf{0}} \le d \le (15,0,0,15,0,0,10,10,10)^T\right\}$. 

We abstract the states corresponding to the  links $1,4,7,8,9$ to $3$ intervals and the states of the remaining links into $2$, generating $3888$ abstract states, where $1664$ of them represent the safe hyper-rectangles. Computing the finite state transition system took $434$ seconds on a 3 GHz dual core Macbook Pro. Solving the safety game took $31$ seconds and the finite state transition system's MRCIS, $Q^\mathcal{I}$, consists of $1176$ abstract states. 

We solved the robust MPC with horizon $H=3$ relying on full enumeration of the controls. We used \eqref{eq:hstep} to find $3$-step reachable set and checked the safety and the terminal constraint for all the reachable set. We didn't use MILP as the number of constraints required to encode the safe and terminal sets are very large. However, we plan to investigate the problem of (approximately) encoding the set constraints using minimal number of constraints in future. Fig. \ref{fig:example} b) shows the \emph{robustness} of the trajectory obtained from the MPC solution simulated for 20 time steps. The robustness is computed by measuring the minimum Euclidian distance of the system's state to the safety set's boundaries \footnote{The robustness defined in this section is inspired by the definition of STL robustness (see \cite{maler_stl}).}. The exogenous inputs were randomly chosen from $\mathcal{D}$ with uniform distribution. The values of estimated exogenous inputs for optimal planning were also randomly chosen.

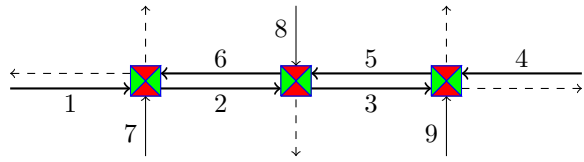
\begin{figure}[t]
\begin{center}
\begin{tikzpicture}[xscale=2,yscale=1]
\draw [thick,->] (0.1,-0.1)--(0.9,-0.1);
\draw [thick,->] (1.1,-0.1)--(1.9,-0.1);
\draw [thick,->] (2.1,-0.1)--(2.9,-0.1);
\draw [dashed,->] (3.1,-0.1)--(3.9,-0.1);
\draw [dashed,<-] (0.1,0.1)--(0.9,0.1);
\draw [thick,<-] (1.1,0.1)--(1.9,0.1);
\draw [thick,<-] (2.1,0.1)--(2.9,0.1);
\draw [thick,<-] (3.1,0.1)--(3.9,0.1);
\draw [] (0.9,-0.2) rectangle (1.1,0.2);
\draw [] (1.9,-0.2) rectangle (2.1,0.2);
\draw [] (2.9,-0.2) rectangle (3.1,0.2);
\draw [->] (1,-1)--(1,-0.2);
\draw [dashed,<-] (2,-1)--(2,-0.2);
\draw [->] (3,-1)--(3,-0.2);
\draw [dashed,<-] (1,1)--(1,0.2);
\draw [->] (2,1)--(2,0.2);
\draw [dashed,<-] (3,1)--(3,0.2);

\draw [blue, fill=red] (0.9,-0.2) -- (1.1,-0.2) -- (1,0) -- (0.9,-0.2);
\draw [blue, fill=red] (0.9,0.2) -- (1.1,0.2) -- (1,0) -- (0.9,0.2);
\draw [blue, fill=green] (0.9,-0.2) -- (1,0) -- (0.9,0.2) -- (0.9,-0.2);
\draw [blue, fill=green] (1.1,-0.2) -- (1,0) -- (1.1,0.2) -- (1.1,-0.2);

\draw [blue, fill=red] (1.9,-0.2) -- (2.1,-0.2) -- (2,0) -- (1.9,-0.2);
\draw [blue, fill=red] (1.9,0.2) -- (2.1,0.2) -- (2,0) -- (1.9,0.2);
\draw [blue, fill=green] (1.9,-0.2) -- (2,0) -- (1.9,0.2) -- (1.9,-0.2);
\draw [blue, fill=green] (2.1,-0.2) -- (2,0) -- (2.1,0.2) -- (2.1,-0.2);

\draw [blue, fill=red] (2.9,-0.2) -- (3.1,-0.2) -- (3,0) -- (2.9,-0.2);
\draw [blue, fill=red] (2.9,0.2) -- (3.1,0.2) -- (3,0) -- (2.9,0.2);
\draw [blue, fill=green] (2.9,-0.2) -- (3,0) -- (2.9,0.2) -- (2.9,-0.2);
\draw [blue, fill=green] (3.1,-0.2) -- (3,0) -- (3.1,0.2) -- (3.1,-0.2);

\node at (0.5,-0.3) {1};
\node at (1.5,-0.3) {2};
\node at (2.5,-0.3) {3};
\node at (3.5,0.3) {4};
\node at (1.5,0.3) {6};
\node at (2.5,0.3) {5};

\node at (0.9,-0.7) {7};
\node at (1.9,0.7) {8};
\node at (2.9,-0.7) {9};

%
%

\end{tikzpicture}

\caption{The urban traffic network studied in this example} 
\label{fig:network}
\end{center}
\end{figure}

\begin{figure}[t]
\begin{center}
\begin{tabular}{@{}c@{}@{}c@{}}

            \includegraphics[width=0.49\textwidth]{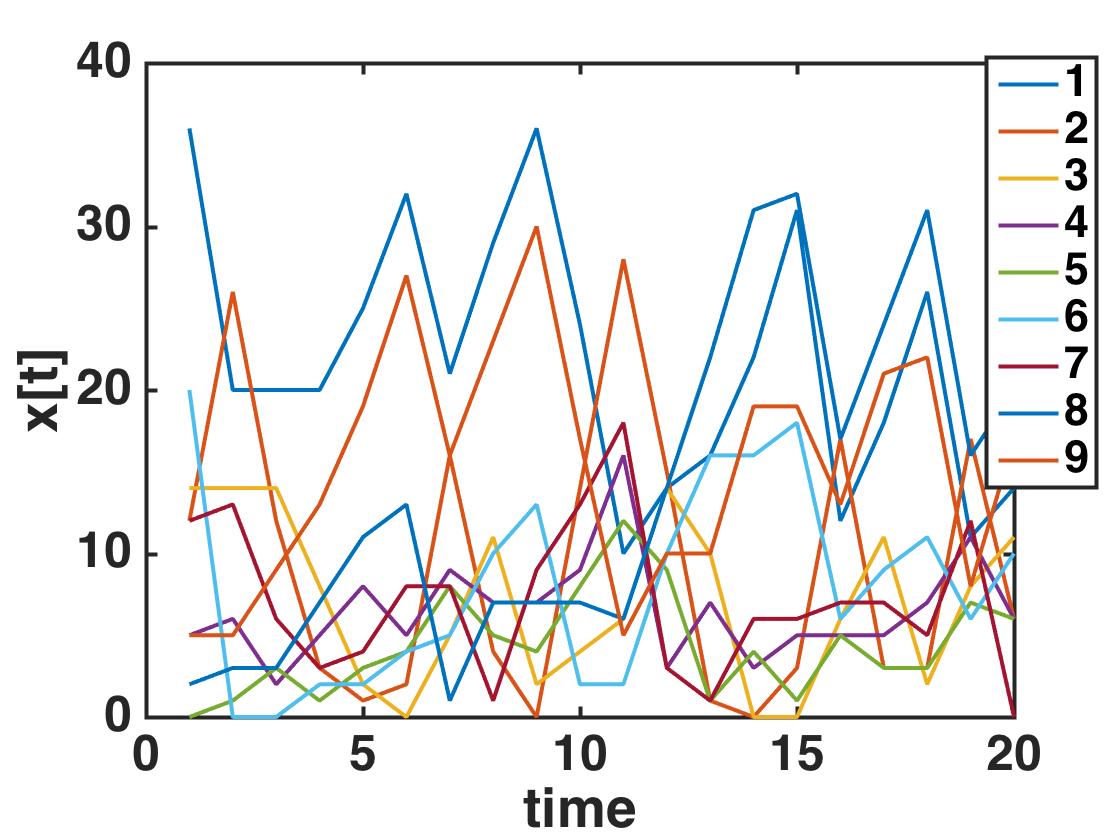} &  \includegraphics[width=0.49\textwidth]{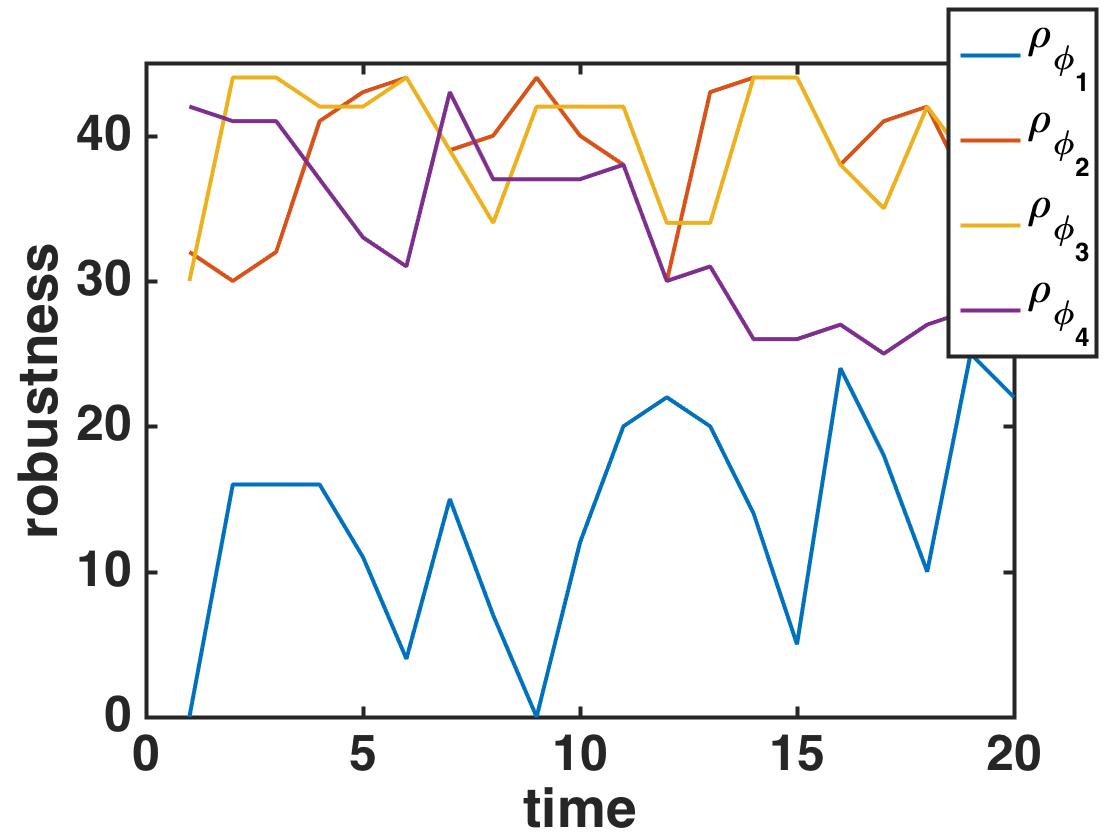}
                        \\
                                          a) &  b) 
        \end{tabular}
\caption{a) The number of vehicles on each link b) robustness $\rho$ measures the distance to the safety set boundaries. Notice that robustness is always above zero since the trajecory is always in the safe set.}

\label{fig:example}
\end{center}

\end{figure}

\section{Future Work}
Our future work involves determining safe sets using learning methods. Using real world traffic data, we will characterize the trajectories that result in undesirable behavior. Therefore, safe sets considered in this paper will not be user-defined but generated from real data. By combining learning techniques, formal methods and control theory, we will improve current measures of intelligent traffic management.

\section*{Acknowledgement}
We thank Majid Zamani and Matthias Rungger from TU M{\"u}nchen for useful discussions on the material of this paper. 

\bibliographystyle{plain}
\bibliography{bib_trafmpc}

\end{document}